\newtheorem{theorem}{Theorem}
\newtheorem{lemma}[theorem]{Lemma}
\newenvironment{proof}[1][Proof]{\textbf{#1.} }{\ \rule{0.5em}{0.5em}}
\newcommand{\Tmin}{T_{min}}
\newcommand{\Tmax}{T_{max}}
\begin{document}

\title{Proactive Resource Allocation: Turning Predictable Behavior into Spectral Gain}

\author{
\authorblockN{Hesham El Gamal}
\authorblockA{Department of Electrical\\
 and Computer Engineering\\
Ohio State University, Columbus, USA\\
helgamal@ece.osu.edu}
\and
\authorblockN{John Tadrous}
\authorblockA{Wireless Intelligent Networks\\
 Center (WINC)\\
Nile University, Cairo, Egypt\\
john.tadrous@nileu.edu.eg}
\and
\authorblockN{Atilla Eryilmaz}
\authorblockA{Department of Electrical\\
 and Computer Engineering\\
Ohio State University, Columbus, USA\\
eryilmaz@ece.osu.edu }
}

\maketitle

\begin{abstract}
This paper introduces the novel concept of proactive resource allocation in which the predictability of user behavior is exploited to balance the wireless traffic over time, and hence, significantly reduce the bandwidth required to achieve a given blocking/outage probability. We start with a simple model in which the smart wireless devices are assumed to predict the arrival of new requests and submit them to the network $T$ time slots in advance. Using tools from large deviation theory, we quantify the resulting {\bf prediction diversity gain} to establish that the decay rate of the outage event probabilities increases linearly with the prediction duration $T$. This model is then generalized to incorporate the effect of prediction errors and the randomness in the prediction lookahead time $T$. Remarkably, we also show that, in the cognitive networking scenario, the appropriate use of proactive resource allocation by the primary users results in more spectral opportunities for the secondary users at a marginal, or no, cost in the primary network outage. Finally, we conclude by a discussion of the new research questions posed under the umbrella of the proposed {\bf proactive (non-causal) wireless networking} framework.
\end{abstract}

\section{A New Paradigm for Resource Allocation}
\label{sec:Paradigm}
Ideally, wireless networks should be optimized to deliver the best Quality of Service (in terms of reliability, delay, and throughput) to the subscribers with the minimum expenditure in resources. Such resources include transmitted power, transmitter and receiver complexity, and allocated frequency spectrum. Over the last few years, we have experienced an ever increasing demand for wireless spectrum resulting from the adoption of {\em throughput hungry} applications in a variety of civilian, military, and scientific settings. Since the available spectrum is non renewable and limited, this demand motivates the need for efficient wireless networks that {\bf maximally utilize} the spectrum. In this work, we focus our attention on the resource allocation aspect of the problem and propose a new paradigm that offers remarkable spectral gains in a variety of relevant scenarios. More specifically, our proactive resource allocation framework exploits the predictability of our daily usage of wireless devices to smooth out the traffic demand in the network, and hence, reduce the required resources to achieve a certain point on the Quality of Service (QoS) curve. This new approach is motivated by the following observations.


\begin{itemize}
\item While we are experiencing a severe shortage in the spectrum, it is well-documented now that a significant fraction of the available spectrum
is under-utilized \cite{FCC2002}. This, in fact, is the main
motivation for the cognitive networking framework where secondary
users are allowed to use the spectrum in the off time, where the
primary users are idle, in an attempt to maximize the spectral
efficiency~\cite{Mitola}. Unfortunately, the cognitive radio
approach is still facing significant regulatory and technological
hurdles~\cite{Ian},~\cite{Gridlock} and, at best, will offer only a
partial solution to the problem. This limitation of the cognitive
radio approach is intimately tied to the main reason behind the
under-utilization of the spectrum; namely \emph{the large disparity
between the average and peak traffic demand in the network}. As an
example, if we take a typical cellular network, one can easily see
that the traffic demand in the peak hours is much higher than that
at night; which inspires the different rates offered by
cellular operators. Now, the cognitive radio approach assumes that
the secondary users will be able to utilize the spectrum in the off-peak times but, unfortunately, at those particular times one may
expect the secondary traffic characteristics to be similar to that
of the primary users (e.g., at night most of the primary and
secondary users are expected to be idle). As argued in the following, the overarching goal of the proactive resource allocation framework is to avoid this limitation, and hence, achieve a significant reduction in the peak to average demand ratio {\bf without relying on out of network users}.

\item In the traditional approach, wireless networks are
constructed assuming that the subscribers are equipped with {\em
dumb terminals} with very limited computational power. It is obvious
that the new generation of {\em smart devices} enjoy significantly
enhanced capabilities in terms of both {\bf processing power and
available memory}. Moreover, according to Moore's law predictions,
one should expect the computational and memory resources available
at the typical wireless device to increase at an exponential rate.
This observation should inspire a similar paradigm shift in the
design of wireless networks whereby the capabilities of the smart
wireless terminals are leveraged to maximize the utility of the
frequency spectrum, {\em a non-renewable resource that does not
scale according to Moore's law}. Our proactive resource allocation framework is a significant step in this direction.

\item The introduction of smart phones has resulted in a paradigm shift in the dominant traffic in mobile
cellular networks. While the primary traffic source in traditional
cellular networks was {\bf real time} voice communication, one can
argue that a significant fraction of the traffic generated by the
smart phones results from non-data-requests (e.g., file
downloads). As demonstrated in the following, this feature allows for more degrees of freedom in the design of the scheduling algorithm.

\item The final piece of our puzzle relates to the observation
that our usage of the wireless devices is {\bf highly predictable}.
This claim is supported by a growing body of evidence that range
from the recent launch of {\bf Google Instant} to the interesting
findings on our predictable mobility patterns~\cite{SQBB10}. In
our context, a relevant example would be the fact that our
preference for a particular news outlet is not expected to change
frequently. So, if the smart phone observes that the user is
downloading CNN, for example, in the morning for a sequence of days
in a row then it can {\bf safely anticipate} that the user will be
interested in the CNN again the following day. Coupled with the
fact that the most websites are refreshed at a relatively slow rate,
as compared with the dynamics of the underlying wireless network,
one can now see the potential for scheduling early downloads of the
predictable traffic to {\bf reduce the peak to average traffic
demand} by maximally exploiting the available spectrum in the network
idle time. 
\end{itemize}
It is important to observe here the {\bf temporal and spatial scales} at which this predictability phenomenon exhibits itself. First there is a growing body of evidence that our behavioral patterns can be accurately predicted at the {\bf single} user level. On the temporal scale, the requests are largely predictable at the scale of the application layer (e.g., minutes and hours) which is much slower than the dynamics of the physical, medium access, and network layers. This critical property is a key enabler for exploiting capacity enhancing techniques that introduce delays at the same time scale.

The objective of this paper is to highlight the potential improvement in the spectral efficiency of wireless networks through the judicious exploitation of the predictable behavior of wireless users. More specifically, in the current paradigm, traffic requests are considered urgent, at the time scale of the application layer, and hence, have to be served upon initiation by the network users in order to satisfy the required QoS metrics. However, if the wireless devices can {\bf predict} the requests to be generated by the corresponding users and submit them in advance, then the network will have the flexibility in scheduling these requests over an expanded time horizon as long as the imposed deadlines are not violated. When a {\bf predictive} network serves a request before its deadline, the corresponding data is stored in cache memory of the wireless device and, when the request is actually initiated, the application pulls the information directly from the memory instead of accessing the wireless network. It is worth noting that, not all applications, although predictable, can be served prior to their time of initiation. For example, some multimedia traffic maybe predictable, but, can only be served on a real time basis as they are based on live interactions between users. However, predicting these type of requests can still be considered as an advantage, as the network may schedule other non-real-time requests while taking into account the predicted real-time requests in a way that enhances the QoS of all applications.

The rest of this paper is mostly devoted to developing quantitative evidence that supports the previous qualitative discussion via analyzing certain asymptotic scenarios. More specifically, Section~\ref{sec:SysMod} describes a simplified system that will be the basis of our analytical results. The notion of {\bf prediction diversity} is introduced in Section~\ref{sec:Out_Anal} and quantified under different assumption on the performance of the prediction algorithm. Our analysis is extended to the scenario where users require different QoS guarantees, e.g., primary and secondary users in Section \ref{sec:Out_Sec}. Here, we demonstrate a remarkable phenomenon whereby prediction at one user, i.e., {\em good citizen}, is shown to improve the performance of the other without compromising its own. Throughout the paper, our theoretical claims are supported by numerical results that clearly illustrate the potentially remarkable gains in spectral efficiency that can be achieved by our proactive resource allocation approach. Finally, the paper is concluded in Section \ref{sec:Conc} with a discussion on the more general {\bf proactive wireless networking} paradigm and the research challenges associated with it.

\section{System Model}
\label{sec:SysMod}

Unless otherwise stated, we adopt a simplified model of a single cell slotted wireless network where the {\bf aggregate} requests are allowed to arrive only at the beginning of each slot. The number of arriving requests at time slot $n>0$ is denoted by $Q(n)$ which is assumed to be ergodic and to follow a Poisson distribution with rate $\lambda$. All requests are assumed to have the same amount of required resources which is taken to be unity. That is, each request has to be totally served in a single slot by consuming one unit of resource. Moreover, the wireless network has a {\bf fixed} capacity $C$ per slot. Furthermore, we assume that a predictive wireless network can anticipate the arrival of each request by an integer number of time slots in advance. That is, if $q(n)$, $1\leq q\leq Q(n)$, is the ID of a request predicted at the beginning of time slot $n$, the predictive network has the capability of serving this request no later than the next $T_{q(n)}$ slots. Hence, when a request $q(n)$ arrives at a predictive network, it has a deadline at time slot $D_{q(n)}=n+T_{q(n)}$ as shown in Fig. \ref{fig:Model_1}.
\begin{figure}
	\centering
		\includegraphics[width=0.40\textwidth]{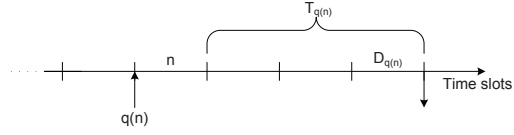}
	\caption{Prediction Model}
	\label{fig:Model_1}
\end{figure}
In the \emph{non-predictive} network, all arriving requests at the beginning of time slot $n$ have to be served in the same time slot $n$, i.e., if $q(n)$ is a non-predicted request, its deadline is $D_{q(n)}=n$ meaning that $T_{q(n)}=0$. We assume that an outage event occurs at a certain time slot if and only if at least one of the requests in the system expires in this slot. At this point, we wish to stress the fact our model operates as the time scale of the application layer at which 1) the current paradigm, i.e., non-predictive networking, treats all the requests as urgent, 2) each slot duration will be in the order of minutes and possibly hours, and 3) the system capacity is fixed since the channel fluctuation dynamic are averaged out at this time scale.

In this work, we study the probability of outage, $P(\text{outage})$, as the performance metric under a scaling regime whereby $\lambda$ and $C$ increase such that the ratio $\frac{\log(\lambda)}{\log C}$ is kept at a constant value $\gamma$, $0\leq\gamma \leq 1$. In other words, we scale $\lambda$ as $C^{\gamma}$ for each choice of $\gamma$. Under this assumption we characterize the \textbf{diversity gain} defined as 
\begin{equation}
d(\gamma)\triangleq \lim_{C\rightarrow\infty}\frac{-\log P(\text{outage})}{C \log C}
\end{equation}
for both the non-predictive and predictive networks.

\section{Prediction Diversity}
\label{sec:Out_Anal}
In this section we characterize the diversity gain for the two networks when both witness the same arrival process $Q(n)$, $n>0$ per slot. The difference only is in the deadlines of the arriving requests. The deadline for a request $q(n)$ is slot $n$ when the network is non-predictive, and is $n+T_{q(n)}$ when the network is predictive with $T_{q(n)}=1,2,\cdots$. In general, as the system capacity $C$ grows, the outage probability is expected to decrease. In our analysis we use tools of large deviation theory \cite{Gallager}, \cite{Glynn} to characterize $d(\gamma)$, which quantifies the achievable diversity-multiplexing tradeoff, in different scenarios. The following result determines the prediction diversity gain for the deterministic look-ahead time case, i.e., $T_{q(n)}=T$  $\forall q(n).$

\begin{theorem}
\label{th:1}
The diversity gain of proactive scheduling for the above model with
$T$-slot prediction equals
$$d_P(\gamma)=(1+T)(1-\gamma).$$
Noting that the diversity gain of the non-predictive scenario is
obtained as a special case by setting $T=0$, i.e.,
$d_N(\gamma)=(1-\gamma),$ this result reveals that proactive
scheduling offers a multiplicative gain of $(1+T)$ in the achievable
diversity advantage.
\end{theorem}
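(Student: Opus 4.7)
The plan is to identify the outage event with the violation of a Hall-type feasibility condition on windowed arrival counts, and then apply Poisson large-deviation estimates to each window length separately. First I would observe that, under earliest-deadline-first (EDF) scheduling---an optimal online rule for this class of single-server deadline-constrained problems---no request expires by slot $n$ if and only if, for every interval $[a,b]$ with $b\le n$, the jobs whose deadlines lie in $[a,b]$ fit into the available capacity $(b-a+1)C$. Since a request arriving at slot $k$ has deadline $k+T$, these jobs are precisely those arriving in $[a-T,b-T]$. Writing $L:=b-T-a+1$ for the length of the arrival window, the no-outage condition becomes
\begin{equation*}
\sum_{k=a}^{a+L-1}Q(k)\le(L+T)\,C\qquad\text{for every starting slot }a\text{ and every }L\ge 1.
\end{equation*}
Consequently, an outage at slot $n$ corresponds to the existence of some $L\ge 1$ for which $\sum_{k=n-T-L+1}^{n-T}Q(k)>(L+T)C$.

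For the upper bound on $P(\text{outage})$, I would apply a union bound over $L$. Because $\{Q(k)\}$ is i.i.d.\ Poisson with rate $\lambda=C^{\gamma}$, the length-$L$ window sum is $\mathrm{Poisson}(L\lambda)$, and the Chernoff bound gives
\begin{equation*}
P\bigl(\mathrm{Poisson}(L\lambda)\ge(L+T)C\bigr)\le e^{-L\lambda}\left(\frac{eL\lambda}{(L+T)C}\right)^{(L+T)C}.
\end{equation*}
Taking logarithms and dividing by $C\log C$, the identity $\log\bigl(eL\lambda/((L+T)C)\bigr)=-(1-\gamma)\log C+O(1)$ shows that the $L$-th contribution has exponent $(L+T)(1-\gamma)$. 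This is strictly increasing in $L$ with a gap of $(1-\gamma)C\log C$ between consecutive terms, so the tail of the union bound is suppressed on the $C\log C$ scale and is dominated by $L=1$, yielding $\liminf_{C\to\infty}\frac{-\log P(\text{outage})}{C\log C}\ge(1+T)(1-\gamma)$.

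For the matching lower bound, I would retain only the single-slot overload event $\{Q(n-T)>(1+T)C\}$, which trivially forces an outage at slot $n$: the $Q(n-T)$ arrivals share the common deadline $n$ and cannot be absorbed by the $(1+T)C$ units of aggregate capacity available in slots $[n-T,n]$. Keeping a single term $\lambda^{k}e^{-\lambda}/k!$ with $k=\lceil(1+T)C\rceil+1$ and invoking Stirling's formula gives $-\log P\bigl(Q>(1+T)C\bigr)\le(1+T)(1-\gamma)C\log C\,(1+o(1))$, which together with the upper bound establishes $d_{P}(\gamma)=(1+T)(1-\gamma)$. Specialising to $T=0$ collapses the feasibility condition to $Q(n)>C$ and recovers the baseline $d_{N}(\gamma)=1-\gamma$.

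The main technical hurdle I foresee is the union-bound step: one has to check that summing the Chernoff exponents over all window lengths $L\ge 1$---and, depending on the precise interpretation of $P(\text{outage})$, over starting positions---does not erode the leading $C\log C$ scaling. Because the gaps between consecutive exponents diverge as $(1-\gamma)C\log C$, the tail of the sum is geometrically dominated by the $L=1$ term; the remaining care is to handle stationarity and boundary issues cleanly, so that translating between ``outage in a given slot'' and ``a specific window is overloaded'' introduces only corrections that vanish on the $C\log C$ scale.
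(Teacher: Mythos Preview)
Your argument is correct and close in spirit to the paper's, but the upper-bound step is organized differently. The paper does not union over all window lengths: it isolates a \emph{single} window of length $T+1$, namely
\[
\mathcal{U}_d(n)=\Bigl\{\sum_{i=n-2T}^{\,n-T}Q(i)>C(T+1)\Bigr\},
\]
and asserts (with details deferred to a companion technical report) that this event alone is a necessary condition for an outage at slot $n$ under EDF. Once that is granted, one only has to compute the decay rate of a single Poisson tail, which matches the rate of the lower-bound event $\mathcal{L}_d(n)=\{Q(n-T)>C(T+1)\}$---the same single-slot overload you use.

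Your route replaces that single necessary event by the full Hall/demand-bound characterization and then shows, via a union bound, that the $L=1$ term dominates because the exponent $(L+T)(1-\gamma)$ is strictly increasing in $L$ with gaps of order $(1-\gamma)C\log C$. This buys you a self-contained argument that does not rely on the (nontrivial) claim that one fixed window already captures the outage event; the price is the extra summation step you flag at the end. Conversely, the paper's version is analytically lighter once $\mathcal{U}_d$ is accepted as necessary, but it leans on an external reference for that structural fact. Either way, the lower bound and the resulting $d_P(\gamma)=(1+T)(1-\gamma)$ coincide.
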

\begin{proof} (Sketch) We start with the non-predictive
benchmark corresponding to $T=0$. In this case, the outage
probability in any slot $n$ corresponds to the event $\{Q(n)>C\},$
which can be expressed as
\begin{equation}
\label{eq:P_N1}
P_N(\text{outage})=\sum_{k=C+1}^{\infty}{\frac{(C^{\gamma})^k}{k!}e^{-C^{\gamma}}}.
\end{equation}
For large values of $C$, tightest Chernoff bound \cite{Gallager} can be used to upper bound the outage probability as
\begin{equation}
\label{eq:C_N2}
P_N(\text{outage})\leq  e^{C-C^{\gamma}-(1-\gamma)C\log C}.
\end{equation}
Furthermore, from \eqref{eq:P_N1}, it is obvious that
\begin{equation}
\label{eq:L_N}
P_N(\text{outage})\geq \frac{C^{\gamma(C+1)}}{(C+1)!}e^{-C^{\gamma}},
\end{equation}
hence, by taking the $\log$ of the upper and lower bounds on $P_N(\text{outage})$ in \eqref{eq:C_N2}, \eqref{eq:L_N} and dividing by $-C\log C$ it follows directly that the diversity gain of the non-predictive network is equal to
\begin{equation}
\label{eq:d_N}
d_N(\gamma)=1-\gamma.
\end{equation}
For $T>0$, it is easy to see that the First-In-First-Out (FIFO), or
equivalently Earliest Deadline First (EDF), scheduling policy
minimizes the outage probability in this simple scenario. To
characterize the diversity gain, we first need to define the
following two events to upper and lower bound the outage event
$$\mathcal{U}_d(n)\triangleq\left\{\sum_{i=n-2T}^{n-T}{Q(i)}>C(T+1)\right\},$$
$$\mathcal{L}_d(n)\triangleq\left\{Q(n-T)>C(T+1) \right\}.$$ In
the steady state, i.e., when $n\rightarrow\infty$, we have shown
in~\cite{GamTadEry10} that $${\rm Pr}(\mathcal{L}_d(n))\leq
P_P(\text{outage})\leq {\rm Pr}(\mathcal{U}_d(n)).$$ We further showed
that $$\lim_{C\rightarrow\infty} -\frac{\log {\rm Pr}(\mathcal{L}_d)}{C
\log C}=\lim_{C\rightarrow\infty} -\frac{\log {\rm Pr}(\mathcal{U}_d)}{C
\log C}=(1+T)(1-\gamma).$$ Combining these two relationships results
in our claimed diversity gain expression:
$d_P(\gamma)=(1+T)(1-\gamma).$
\end{proof}

Now, we consider a more general case where $T_{q(n)}, 0\leq q\leq Q(n), n>0$ is a sequence of i.i.d. nonnegative integer-valued random variables defined over a finite support $\Tmin, \Tmin+1, \cdots, \Tmax$. First, we start with the scenario where probability mass function (PMF) of $T_{q(n)}$ does not scale with $C$ and establish the critical dependence of the achievable diversity gain on $\Tmin>0$.

\begin{lemma}
Let the PMF of $T_{q(n)}$ be given by
\begin{equation}
\label{eq:PMFofT}
Pr(T_{q(n)}=k)\triangleq\left\{
\begin{array}{l l}
p_k,\quad \quad \Tmin\leq k \leq\Tmax,\\
0,\quad \quad \text{otherwise},
\end{array}
\right.
\end{equation}
and the probabilities $p_k$'s are constants that do not depend on $C$. Then,
$$d_P(\gamma)=(1+\Tmin)(1-\gamma).$$
\end{lemma}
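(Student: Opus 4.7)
Our plan is to pin down $d_P(\gamma)$ via matched asymptotic bounds on $P_P(\text{outage})$, adapting the template of Theorem~\ref{th:1}. The guiding intuition is that requests drawn with the smallest look-ahead $\Tmin$ form the ``hardest'' sub-stream of arrivals: the window available to serve them is only $\Tmin+1$ slots, so they behave exactly like the deterministic $T=\Tmin$ case, while longer look-aheads give the scheduler only more slack. Thus the diversity should be neither better nor worse than $(1+\Tmin)(1-\gamma)$, and proving this amounts to producing a sufficient outage event that matches a sharp union-bound converse.

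For the \emph{upper bound} on $d_P$ (i.e., a matching lower bound on the outage probability), we invoke the splitting property of the Poisson arrivals: the sub-stream of requests with $T_{q(n)}=\Tmin$ is itself Poisson, with rate $p_{\Tmin}\lambda = p_{\Tmin}C^{\gamma}$. Letting $\tilde Q(i)$ denote this sub-count at slot $i$, the event $\{\tilde Q(n-\Tmin) > C(\Tmin+1)\}$ suffices for an outage at slot $n$, because more than $C(\Tmin+1)$ requests with deadline exactly $n$ arrive at slot $n-\Tmin$ and must all be served within the window $[n-\Tmin,n]$, which has total capacity $C(\Tmin+1)$. Lower-bounding this Poisson tail by its single term at $C(\Tmin+1)+1$ exactly as in~\eqref{eq:L_N}, and applying Stirling's formula, gives
$$-\log\Pr\bigl(\tilde Q(n-\Tmin)>C(\Tmin+1)\bigr) \leq (1-\gamma)(1+\Tmin)\, C\log C + O(C),$$
from which $d_P(\gamma)\leq (1+\Tmin)(1-\gamma)$ follows upon normalization by $C\log C$.

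For the \emph{lower bound} on $d_P$ (upper bound on the outage probability), we use that under the throughput-optimal Earliest-Deadline-First policy, an outage at slot $n$ implies the existence of some $a$ with $W(a,n)>C(n-a+1)$, where $W(a,n)$ counts requests arriving in $[a,n]$ with deadline $\leq n$. Since $T_{q(i)}\geq \Tmin$, only values $a\leq n-\Tmin$ are feasible. By Poisson splitting and superposition, $W(a,n)$ is Poisson with mean at most $\lambda(n-a+1)=C^{\gamma}(n-a+1)$. The tightest Chernoff bound, exactly as in~\eqref{eq:C_N2}, then gives, uniformly in $k:=n-a+1\geq \Tmin+1$,
$$\Pr\bigl(W(a,n)>Ck\bigr) \leq \exp\bigl(-(1-\gamma)\,Ck\log C + O(Ck)\bigr).$$
A union bound over $a$ collapses this to a series geometric in $k$ at the scale of $\log C$, dominated by its first term at $k=\Tmin+1$, yielding $d_P(\gamma)\geq (1+\Tmin)(1-\gamma)$.

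The main obstacle lies in the converse direction: one must verify that the $O(Ck)$ slack in the Chernoff bound is truly uniform in $k$ so that the summation does not erode the leading exponent, and separately handle long windows $a<n-\Tmax$ in which \emph{every} request (not just the $p_{\Tmin}$ fraction) contributes to $W(a,n)$; the crude rate estimate $\lambda(n-a+1)$ absorbs this case without change. The positivity of $p_{\Tmin}$ is essential for the achievability step, since it keeps the effective arrival rate of the minimum-deadline sub-stream at order $C^{\gamma}$ rather than zero, and thereby prevents the leading exponent from being inflated beyond $(1+\Tmin)(1-\gamma)$.
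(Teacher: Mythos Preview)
Your proposal is correct, and the lower-bound-on-outage direction (Poisson thinning to the $\Tmin$ sub-stream, then the single-slot overflow event $\{\tilde Q(n-\Tmin)>C(\Tmin+1)\}$) is exactly what the paper does. The difference lies in the other direction.

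For the upper bound on the outage probability, the paper takes a one-line coupling route: replacing every $T_{q(n)}$ by $\Tmin$ can only increase the outage probability (shorter deadlines never help the scheduler), and the resulting system is precisely the deterministic-$T$ model of Theorem~\ref{th:1} with $T=\Tmin$, whose diversity is already known to be $(1+\Tmin)(1-\gamma)$. No new tail computation is needed. Your route instead invokes the EDF demand-bound characterization (outage at $n$ forces some window $[a,n]$ with $W(a,n)>C(n-a+1)$), bounds each window by a Poisson tail via Chernoff, and then controls the infinite union by observing that the resulting series is geometric in $k=n-a+1$ at scale $C\log C$. This is heavier machinery but is self-contained and does not lean on Theorem~\ref{th:1}; it also makes transparent why the exponent is governed by the shortest feasible window $k=\Tmin+1$, and would extend without change to look-ahead distributions with unbounded support, where the paper's finite-$\Tmax$ reduction is not directly available. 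The paper's monotonicity argument, on the other hand, is shorter and sidesteps the uniformity-in-$k$ bookkeeping you flag as the main obstacle.
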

\begin{proof} (Sketch) A lower bound on the outage probability can be obtained by considering only the fraction of the requests corresponding to $\Tmin$ whereas an upper bound can be obtained by making $T_{q(n)}=\Tmin$ $\forall q(n)$. It is easy to see that both bounds have the same decay rate corresponding to the stated diversity advantage.
\end{proof}

It is clear that the diversity gain of random $T$ scenario is dominated by the requests with $T=\Tmin$, and hence, under the previous assumptions the system will not experience any prediction diversity gains when $\Tmin=0$. Two observations are now in order.

\begin{enumerate}

\item Despite the lack of gain in prediction diversity in this scenario, our numerical results, reported later, still demonstrate remarkable gains in {\bf the outage probability} for a wide range of system parameters.

\item When the fraction of requests corresponding to $\Tmin$ decays as $C$ grows, which is reasonable to expect in many emerging applications as most of the new demand corresponds to {\bf predictable and delay tolerant} data traffic, then the proactive resource allocation framework is able to harness improved prediction diversity gains. This can be viewed as follows. To illustrate the idea, let's assume that $\Tmin=0$ and $p_{\Tmin}=p_{0}=C^{-\alpha}$, $\alpha>0$, then it is easy to see that the diversity gain of the predictive network will be given by,
\begin{equation}
\label{eq:d_P_alpha1}
d_P(\gamma)=1+\alpha-\gamma
\end{equation}
as long as $1+\alpha-\gamma$ is smaller than $2(1-\gamma)$ or equivalently, $\alpha\leq 1-\gamma$. Otherwise, the diversity gain will be determined by the requests with $T=1$ and will be given by
\begin{equation}
\label{eq:d_P_alpha2}
d_P(\gamma)=2(1-\gamma).
\end{equation}
This argument is extended in~\cite{GamTadEry10} for more general distributions of the look-ahead time $T$.

\end{enumerate}

Thus far, we have shown that the proposed proactive resource allocation paradigm will significantly enhance the prediction diversity gain under the assumption of perfect, i.e., error free, prediction. Now, we investigate the effect of prediction error on the prediction diversity gain. In our analysis, we consider the deterministic $T$ scenario, and assume that the traffic of the non-predictive system is characterized by the process $Q(n), n>0$ which represents the number of arriving requests at the beginning of time slot $n$ with $T=0$. This process is Poisson with rate $C^{\gamma}$. Moreover, the system is operating according to the Shortest Deadline First scheduling policy. Our model differentiates between the following two prediction error events.
\begin{enumerate}
	\item The network mistakenly predicts a request and serves it resulting in an increase in the traffic load.
	\item The predictive network fails to predict a request and, as a consequence, it encounters an urgent arrival with $T_{q(n)}=0$.
\end{enumerate}
Therefore, the arriving requests $Q^E(n)$, $n>0$ can be regarded as the superposition of two arrival processes: 1) $Q'(n)$ corresponding to the the predicted request at the beginning of time slot $n$ with deadline $n+T$ and 2) $Q''(n)$ corresponding to the urgent requests arriving requests at the beginning of time slot $n$ and must be served instantaneously. {\bf The judicious design of the prediction algorithm should aim to strike the optimal balance between these two events}. This point is illustrated in the following special case: $Q'(n)$ is Poisson with rate $C^{\gamma'}$, where $\gamma'\in\Re$, and $Q''(n)$ is Poisson with rate $C^{\gamma''}$, $\gamma''\leq \gamma$ such that
\begin{equation}
\label{eq:Rate_constraint}
C^{\gamma'}+C^{\gamma''}\geq C^{\gamma}.
\end{equation}
The constraint $\gamma''\leq \gamma$ follows directly from the fact that the arrival rate of the urgent requests cannot exceed the arrival rate of requests in the error free scenario. On the other hand, the constraint \eqref{eq:Rate_constraint} reflects the fact prediction errors can only increase the arrival rate. In this model, a necessary and sufficient condition for perfect prediction is $\gamma'=\gamma$ and $\gamma''=-\infty$ resulting in $Q^E(n)=Q'(n)=Q(n+T)$. We also let the lookahead time $T$ to be a function of $(\gamma',\gamma'')$ reflecting the fact that more aggressive prediction algorithms will result in a larger $T$ at the expense of introducing larger prediction errors. Finally, we assume that, given $\gamma'$ and $\gamma''$, both processes $Q'(n)$ and $Q''(n)$ are independent.

By setting $\gamma'=\alpha' \gamma$ and $\gamma''=\alpha'' \gamma$, the diversity gain of the predictive network will be given by\footnote{Following similar analysis to that of Section \ref{sec:Out_Anal}.}
\begin{equation}
\label{eq:d_error0}
d_P(\gamma)=\min\{(1+T(\alpha',\alpha''))(1-\max\{\alpha',\alpha''\} \gamma),1-\alpha'' \gamma\}.
\end{equation}
If $\max\{\alpha',\alpha''\}=\alpha''$ the diversity of the predictive network becomes $d_P(\gamma)=1-\alpha''\gamma$. However, since $\alpha''\leq 1$ and from \eqref{eq:Rate_constraint}, it is straightforward to see that $\max\{\alpha',\alpha''\}=\alpha''$ if and only if $\alpha'=\alpha''=1$ corresponding to the scenario where the predictive mechanism is useless. Therefore, in the following we focus on the case where $\alpha'\geq \alpha''$ in which case the prediction diversity gain is given by
\begin{equation}
\label{eq:d_error}
d_P(\gamma)=\min\{(1+T(\alpha',\alpha''))(1-\alpha'\gamma),1-\alpha''\gamma\},
\end{equation}

implying that the predictive system achieves a \emph{strictly} improved diversity gain over the non-predictive system if and only if,
\begin{equation}
\label{eq:Improved_Diver}
\min\{(1+T(\alpha',\alpha''))(1-\alpha' \gamma),1-\alpha'' \gamma\}>1-\gamma.
\end{equation}

We further note that an upper bound on the prediction diversity, for a given $(\alpha', \alpha'')$, corresponds to case where the optimum operating point for the the two quantities inside the $\min\{.\}$ are equal, i.e.,
\begin{equation}
\label{eq:Op_Point}
(1+T(\alpha',\alpha''))(1-\alpha' \gamma)=(1-\alpha'' \gamma)
\end{equation}
or
\begin{equation}
\label{eq:T^*}
T(\alpha',\alpha'')=\frac{(\alpha'-\alpha'')\gamma}{1-\alpha' \gamma}.
\end{equation}

Hence, for a given $(\alpha', \alpha'')$, a prediction algorithm that achieves \eqref{eq:T^*} is optimal in terms of the achievable prediction diversity and there will be no benefit in increasing $T$ further. Based on that, we can see that the achievability of prediction diversity gains hinges on the existence of prediction algorithms that satisfy the following necessary conditions

\begin{align}
&1\leq \alpha'\leq \frac{1}{\gamma},\\
&\alpha''<1
\end{align}

At this point, we wish to stress the fact that the previous model for prediction errors was intended only to illustrate the tradeoff between the two types of error events identified earlier. Our current investigations aim at developing more accurate models that reflect the nature of the traffic requests and the dynamics of the employed prediction algorithms. We conclude this section with numerical results that illustrate the performance gain offered by the proposed proactive resource allocation framework. In Fig. \ref{fig:Paper_sim_FR_T} we plot the outage probability of predictive and non-predictive networks versus $C \log C$. The simulation is based on the EDF policy with $\gamma=0.8$. At each value of $C$, the system is simulated for $10^3$ time slots and the performance is averaged over $10^2$ simulation runs. It is clear, from the results, that there is a remarkable reduction in the resources required to attain a certain level of outage probability when the network employs the predictive resource allocation mechanism.
\begin{figure}
	\centering
		\includegraphics[width=0.45\textwidth]{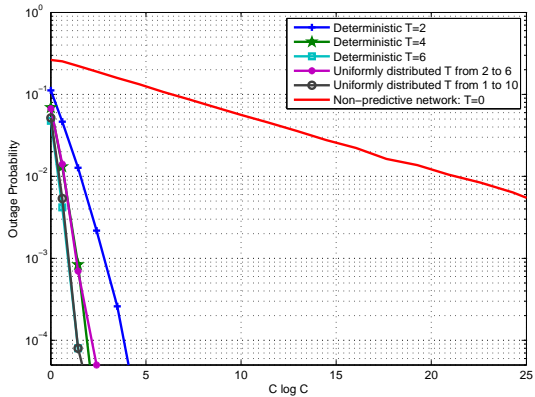}
	\caption{Outage probability vs. $C \log C$ with $\gamma=0.8$.}
	\label{fig:Paper_sim_FR_T}
\end{figure}
Moreover, for the two simulated random $T$ scenarios, although $\Tmin$ is chosen to be $2$ and $1$, the corresponding outage probability curves are upper bounded by the outage probability of the predictive case with deterministic $T=2$. This actually may be a consequence of the small values of $C$ in this figure. Here, the averaging effect over the range between $\Tmin$ and $\Tmax$ appears to have a more favorable impact on the performance than increasing $\Tmin$.

\begin{figure}
	\centering
		\includegraphics[width=0.45\textwidth]{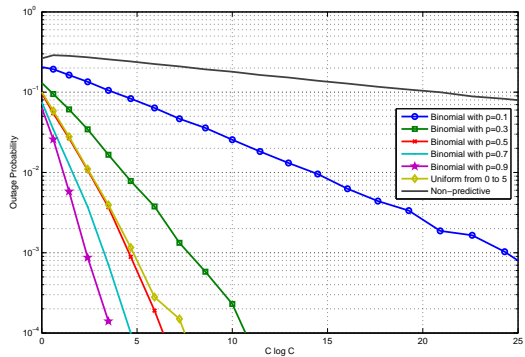}
	\caption{Effect of different distributions $T$ on the outage performance ($\gamma=0.9$).}
	\label{fig:Paper_sim_Dist}
\end{figure}
Fig.~\ref{fig:Paper_sim_Dist} investigates the effect of the distribution of $T$ on the outage (blocking) probability. Here, we consider a class of binomial distributions with finite support from $\Tmin=0$ to $\Tmax=5$ and parameter $p$. That is, $$Pr(T=t)=p_t=\binom{\Tmax}{t}p^t(1-p)^{\Tmax-t}$$ where $\Tmin\leq t\leq \Tmax$. The predictive system is then simulated for different values of $p$ and the outage probability results are depicted. Moreover, the uniform distribution of $T$ over the interval $\Tmin=0$ to $\Tmax=5$ is plotted on the same figure. From the results, one can argue that the outage performance is sensitive to the value of $p_{\Tmin}$ over the simulated range of $C$. Since the binomial distributions of $T$, $p_{\Tmin}$ is monotonically decreasing with $p$ and thus, as the weight of the arrivals with $T=0$ increases the outage behavior becomes worse although all of the outage curves have the same diversity gain in infinite $C$ asymptotic. Also, in case of a uniform distribution, the outage probability curve is quite close to that of the binomial distribution with $p=0.5$ although $p_{\Tmin}$ of the uniform is larger that its peer of the binomial with $p=0.5$. The reason behind this behavior is that, the weights of the higher values of $T$ in case of the uniform distribution are larger than their peers in case of the binomial distribution with $p=0.5$. This advantage enables the scheduler to efficiently reduce the outage probability despite the relatively large probability corresponding to $T=0$ in the uniformly distributed $T$.

\section{Different QoS Users: The Good Citizen Phenomenon}
\label{sec:Out_Sec}
The previous section demonstrates the potential gains that can be leveraged from the proactive resource allocation framework when all the requests belong to the same class of QoS. In this section we consider a network with two QoS classes that can be considered as primary and secondary users sharing the same resources. We investigate the effect of prediction by {\bf the primary user only} on the prediction diversity gain of the secondary network. Clearly, our analysis can be extended to allow for prediction by the secondary user as well; but we choose to limit ourselves to this special case for simplicity.
We assume that the number of secondary arrivals at the beginning of time slot $n$ is $Q^s(n)$, where $Q^s(n)$ follows a Poisson distribution with rate $\lambda^s=C^{\gamma^s}$, $0\leq\gamma^s\leq 1$. The number of primary requests arriving at the beginning of time slot $n$ is $Q^p(n)$ that follows a Poisson distribution with rate $\lambda^p=C^{\gamma^p}$, where $0\leq\gamma^p\leq 1$. We assume that the system is dominated by primary arrivals, that is, $\lambda^p>\lambda^s$ or, equivalently, $\gamma^p>\gamma^s$. The secondary and primary arrival processes are ergodic and independent.
\subsection{Non-Predictive Primary User}
We analyze the outage probability of the secondary user and its diversity gain when the primary user is non-predictive. At the beginning of time slot $n$, the system is supposed to witness $Q^p(n)+Q^s(n)$ arriving requests with deadline is slot $n$, i.e., must be served in the same slot of arrival.
The primary system has a fixed capacity $C$ per slot. In order to enhance the utilization of its resources, the primary user allows secondary requests to be served  by the remaining resources from serving the primary requests. Thus, at slot $n$, the remainder of $C-Q^p(n)$ is assigned to serve the secondary requests. The following result characterizes the achievable diversity gain in this scenario
\begin{theorem}
In the non-predictive scenario, the primary and secondary diversity are equal and given by
\begin{equation}
d_N^s(\gamma^p,\gamma^s)=d_N^p(\gamma^p,\gamma^s)=1-\gamma^p.
\end{equation}

\end{theorem}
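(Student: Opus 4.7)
My plan is to isolate the outage events for the two classes and then recycle the large-deviations machinery already used in Theorem~\ref{th:1}. Because primary requests have strict priority on the shared capacity $C$, a primary outage in slot $n$ coincides exactly with the event $\{Q^p(n)>C\}$, while a secondary outage coincides with $\{Q^p(n)+Q^s(n)>C\}\cap\{Q^s(n)\geq 1\}$, where the qualifier merely insists that some secondary request be present to expire. So I would break the proof into a primary part and a secondary part.

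For the primary diversity the event $\{Q^p(n)>C\}$ is structurally identical to the one analyzed in \eqref{eq:P_N1}--\eqref{eq:L_N} with $C^{\gamma}$ replaced by $C^{\gamma^p}$. Reusing the tightest Chernoff upper bound and the single-term Poisson lower bound, and dividing $\log P$ by $-C\log C$, directly yields $d_N^p(\gamma^p,\gamma^s)=1-\gamma^p$ with no extra work.

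For the secondary diversity I would exploit the fact that $Q^p(n)+Q^s(n)$ is Poisson with rate $\lambda_{\mathrm{tot}}=C^{\gamma^p}+C^{\gamma^s}=C^{\gamma^p}\bigl(1+C^{\gamma^s-\gamma^p}\bigr)$, so that $\log\lambda_{\mathrm{tot}}=\gamma^p\log C+o(\log C)$ since $\gamma^s<\gamma^p$. The same Chernoff bound applied to this Poisson sum upper-bounds the secondary outage probability with a decay exponent of $(1-\gamma^p)C\log C+o(C\log C)$. For the matching lower bound I would note that
\[
P(\text{sec.\ outage})\;\geq\; P(Q^p(n)>C)\,\bigl(1-e^{-C^{\gamma^s}}\bigr),
\]
which follows from independence of $Q^p$ and $Q^s$ together with the inclusion $\{Q^p(n)>C,\,Q^s(n)\geq 1\}\subseteq\{\text{sec.\ outage}\}$. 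Since the factor $\bigl(1-e^{-C^{\gamma^s}}\bigr)$ is bounded away from $0$ for every $\gamma^s\geq 0$, and $P(Q^p(n)>C)$ already has decay rate $1-\gamma^p$, this pins the secondary exponent to the same $1-\gamma^p$.

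The main obstacle is simply a careful bookkeeping check: the additive perturbation $\log\bigl(1+C^{\gamma^s-\gamma^p}\bigr)$ in the effective Poisson rate on the upper-bound side, and the multiplicative factor $\bigl(1-e^{-C^{\gamma^s}}\bigr)$ on the lower-bound side, must both be shown to be $o(C\log C)$ after logarithms and normalization, so that neither the extra secondary load nor the possibility of an empty secondary slot perturbs the leading exponent. Once that is verified the entire calculation is a direct specialization of the non-predictive part of Theorem~\ref{th:1} with $C^{\gamma}$ replaced by $C^{\gamma^p}$, and the assumption $\gamma^p>\gamma^s$ guarantees that primary traffic dominates the large-deviations behavior of the sum.
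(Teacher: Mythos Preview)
Your proposal is correct and follows essentially the same route as the paper: the primary case is a direct reuse of Theorem~\ref{th:1}, and for the secondary case you upper-bound by $\Pr(Q^p+Q^s>C)$ with the Poisson sum of rate $C^{\gamma^p}+C^{\gamma^s}$, and lower-bound via the inclusion $\{Q^p(n)>C,\,Q^s(n)\geq 1\}$ together with independence, exactly as the paper does. The only cosmetic difference is that you make the $o(C\log C)$ bookkeeping for $\log(1+C^{\gamma^s-\gamma^p})$ and $\log(1-e^{-C^{\gamma^s}})$ more explicit than the paper, which simply invokes the Theorem~\ref{th:1} machinery.
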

\begin{proof} (Sketch)
The outage probability of the primary system $P_N^p(\text{outage})$ is identical to the one analyzed in the previous section. As a result, the primary diversity gain is given by
\begin{equation}
\label{eq:d_N^p}
d_N^p(\gamma^p,\gamma^s)=1-\gamma^p.
\end{equation}

The secondary system encounters an outage at a given slot when the remaining resources from serving the primary requests at this slot are less than the number of arriving secondary requests at the beginning of the same slot. Thus, if the primary network suffers an outage in a certain slot with at least one arriving secondary request, the secondary system goes in outage as well. The secondary system, consequently, encounters an outage at slot $n$ if and only if $$Q^p(n)+Q^s(n)>C \quad \text{and} \quad Q^s(n)>0.$$
Let the outage probability of the secondary network when the primary network is non-predictive be denoted by $P_N^s(\text{outage})$, hence
\begin{equation}
\label{eq:P_N^s1}
P_N^s(\text{outage})=Pr\left(Q^p(n)+Q^s(n)>C, Q^s(n)>0\right).
\end{equation}
The two random variables $Q^p(n)+Q^s(n)$ and $Q^s(n)$ are dependent but their joint distribution can simply be obtained by transformation of variables. By setting $Y=Q^p(n)+Q^s(n)$ and $U=Q^s(n)$, the exact expression of $P_N^s(\text{outage})$ will be given by
\begin{equation}
\label{eq:P_N^sExact}
\begin{split}
P_N^s(\text{outage})&=Pr(Y>C,U>0)\\
                    &=\sum_{y=C+1}^{\infty}\sum_{u=1}^{y}\frac{C^{{\gamma^p} (y-u)+{\gamma^s} u}}{(y-u)!u!}e^{-(C^{\gamma^p}+C^{\gamma^s})}.
\end{split}
\end{equation}
The diversity gain of the secondary system coexisting with a non-predictive primary network is defined by $$d_N^s(\gamma^p,\gamma^s)\triangleq \lim_{C\rightarrow\infty}\frac{-\log P_N^s(\text{outage})}{C\log{C}}.$$ For large values of $C$, the outer sum of the right hand side of \eqref{eq:P_N^sExact} is dominated by $y=C+1$. However, the inner sum is not dominated by a single value of $u$ because of $(y-u)!u!$ in the denominator. Consequently, as $C\rightarrow\infty$, $P_N^s(\text{outage})$ can be written as
\begin{equation}
\label{eq:P_N^s2}
P_N^s(\text{outage})\doteq\sum_{u=1}^{C+1}\frac{C^{{\gamma^p} (C+1-u)+{\gamma^s} u}}{(C+1-u)!u!}e^{-(C^{\gamma^p}+C^{\gamma^s})}.
\end{equation}
Characterizing $d_N^s(\gamma^p,\gamma^s)$ from \eqref{eq:P_N^s2} is, however, difficult, so we consider another approach based on the asymptotic behavior of upper and lower bounds on $P_N^s(\text{outage})$.

\subsubsection{Upper Bound on $P_N^s(\text{outage})$}
Since $Pr(\mathcal{A},\mathcal{B})\leq Pr(\mathcal{A})$ with equality if and only if $\mathcal{A}\subseteq{\mathcal{B}}$, then
\begin{align}
P_N^s(\text{outage})& \leq Pr(Q^p(n)+Q^s(n)>C).
\end{align}
The random variable $Q^p(n)+Q^s(n)$ has a Poisson distribution with mean $C^{\gamma^p}+C^{\gamma^s}$, then applying upper and lower bounds on $Pr(Q^p(n)+Q^s(n)>C)$ similar to that conducted with $Pr(Q(n)>C)$ in the proof of Thoerem \ref{th:1}, the diversity gain of the secondary network when the primary network is non-predictive is lower bounded by
\begin{align}
d_N^s(\gamma^p,\gamma^s) &=1-\max\{\gamma^p,\gamma^s\}\\
\label{eq:d_N^sL}
                         &=1-\gamma^p.
\end{align}

We consider the event that there is at least one secondary arrival with a primary outage at slot $n$ as a sufficient but not necessary condition on a secondary outage at slot $n$. That is, $$\mathcal{L}_N^s(n)\triangleq\left\{ Q^p(n)>C, Q^s(n)>0\right\}, \quad n\rightarrow\infty.$$
Note that, the event $\mathcal{L}_N^s(n)$ is not necessary for a secondary outage at slot $n$ as there may be $Q^p(n)<C$ but $Q^s(n)>C-Q^p(n)$ which results in a secondary outage at slot $n$ too. Furthermore, at steady state, $Pr(\mathcal{L}_N^s(n))$ becomes independent of $n$ as both arrival processes, $Q^p(n)$ and $Q^s(n)$, are stationary, hence we use $Pr(\mathcal{L}_N^s)$ instead.
Since $\mathcal{L}_N^s(n)$ is a sufficient condition for a secondary outage, then $P_N^s(\text{outage})\geq Pr(\mathcal{L}_N^s)$. Hence,
\begin{align}
P_N^s(\text{outage}) &\geq Pr(Q^p(n)>C,Q^s(n)>0)\\
                    &=Pr(Q^p(n)>C).Pr(Q^s(n)>0)\\
                    &=Pr(Q^p(n)>C)(1-C^{-\gamma^s})
\end{align}
Therefore
\begin{multline*}
\lim_{C\rightarrow\infty}\frac{-\log P_N^s(\text{outage})}{C\log C}\leq \\ \lim_{C\rightarrow\infty} \frac{-\log Pr(Q^p(n)>C)}{C\log C}-\frac{\log(1-C^{-\gamma^s})}{C\log C},
\end{multline*}
yielding
\begin{equation}
\label{eq:d_N^sU}
d_N^s(\gamma^p,\gamma^s)\leq 1-\gamma^p.
\end{equation}

From \eqref{eq:d_N^sL}, \eqref{eq:d_N^sU}, it follows that
\begin{equation}
d_N^s(\gamma^p,\gamma^s)=1-\gamma^p.
\end{equation}
Hence, it is obvious that the diversity gain of the secondary network in a primary non-predictive mode is the same as the diversity gain of the primary network although the arrival rate of secondary requests is strictly smaller than the primary arrival rate.
\end{proof}

\subsection{Predictive Primary User}
In this case, the system can {\bf only} predict the primary arrivals by $T$ time slots in advance. We assume that $T$ is deterministic and fixed for all primary requests, i.e., the deadline for the primary requests $Q^p(n)$ is $n+T$. The system, however, is assumed to be non-predictive for the secondary requests, i.e., the deadline for the secondary requests $Q^s(n)$ is $n$. When this system dedicates {\bf all} the per-slot capacity $C$ to serve the primary requests, according to the EDF policy, secondary requests arriving at the beginning of time slot $n$ will be served if and only if $C$ is strictly larger than the number of primary requests \emph{existing} in the system at the beginning of this slot. Unfortunately, this service policy does not enhance the outage performance of the secondary system although it minimizes the outage probability of the primary. The main reason is the large variations in the number of served primary requests per slot that takes on values from $0$ to $C$. These variations are quite close to the variations in the number of served primary requests per slot in case of non-predictive primary network.
\begin{figure}
	\centering
		\includegraphics[width=0.45\textwidth]{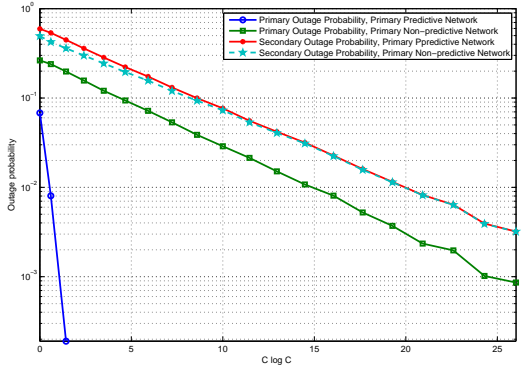}
	\caption{Outage probability vs. $C\log C$ for primary and secondary networks under the two types of primary network: predictive and non-predictive. All are calculated assuming SP1 ($\gamma^p=0.75$, $\gamma^s=0.05$ and $T=4$).}
	\label{fig:Notes3SP1Effect}
\end{figure}
Fig. \ref{fig:Notes3SP1Effect} plots the outage probability of the primary and secondary networks versus $C\log C$ under the two types of primary network, predictive and non-predictive. The results are based on simulations over $M=10^3$ slots and averaging over $100$ simulation runs. It is clear that the outage probability of the primary system when the primary network is predictive is significantly improved over its peer when the primary network is non-predictive. However, it can be noted that by {\bf selfishly} minimizing the outage probability of the primary network, one does not leave room for enhancing the outage probability of the secondary network. In the following, we describe two representative {\bf good citizen} primary policies that result in significant gains in the secondary outage probability at a very marginal cost in terms of the primary outage.

The main idea motivating the first service policy is to minimize the probability of the \emph{dominant} outage event instead of minimizing the overall outage probability. Thus, the diversity gain of the primary network will not be affected while creating more opportunities for secondary requests. Consequently, the outage probability of the secondary network will be enhanced at the same diversity gain of the primary network.

\noindent {\bf Service Policy 2 (SP2):} The primary network is assigned a fixed capacity per slot of $C-\left\lfloor C^{\beta}\right\rfloor$ where $\beta<1$. It uses this fixed capacity to serve as much as possible of primary requests in the system according to the shortest deadline request policy.\\

Clearly SP2 achieves the optimal primary diversity advantage, i.e., $d_P^p(\gamma^p)=(1+T)(1-\gamma^p)$. Moreover, it is shown, numerically, in the following that the outage probability of the secondary network is improved because of the dedicated capacity of $\left\lfloor C^{\beta}\right\rfloor$. At this point, we observe that SP2 allocates a fixed capacity per slot to the primary network. However, due to the variability of the arrival process, one may expect some performance gains if the service policy adaptively decides on the allocated capacity for the primary network based on the number of requests in the system at each slot and their deadlines. This intuition motivates the following policy

\noindent {\bf Service Policy 3 (SP3):} Let $N^p(n)$ be the number of the primary requests in the system at the beginning of time slot $n$, and $N_d^p(n)$ be the number of these requests whose deadline is slot $n$. Then, the capacity of the primary network at slot $n$ is calculated as $$\min\left\{C, N_d^p(n)+f\times(N^p(n)-N_d^p(n))\right\}$$ where $0\leq f\leq1$. After that, the network serves the primary requests according to the EDF policy.\\

It is obvious that the performance of SP3 is highly dependent on the design parameter $f$. At $f=0$, the system, at steady state, is serving only the requests whose deadline is the current slot. In this case the system will be similar to the non-predictive network in terms of primary and secondary outage probabilities. At $f=1$, the system is very selfish, and hence, achieving the optimal primary outage probability. The following numerical results, however, show that intermediate values for $f$ result in significant improvement in the secondary outage while keeping the primary outage probability almost indistinguishable from the optimal one.

The performance of a network with primary and secondary users has been evaluated numerically with the same parameters of Fig. \ref{fig:Notes3SP1Effect} and the results are reported in Figs. \ref{fig:Paper_SP2}, \ref{fig:Paper_SP3}. In Fig. \ref{fig:Paper_SP2}, the outage probability of a primary network following SP2 with $\beta=0.3$ is shown. It is clear from the figure that the outage probability of the secondary network is enhanced over the non-predictive case. However, this improvement comes at the expense of shifting the outage probability curve of the primary network to the right while preserving the optimal diversity advantage. Moreover, although improved, the outage probability of the secondary network appears to have no gain in the decay rate (i.e., diversity).

\begin{figure}[t]
	\centering
		\includegraphics[width=0.45\textwidth]{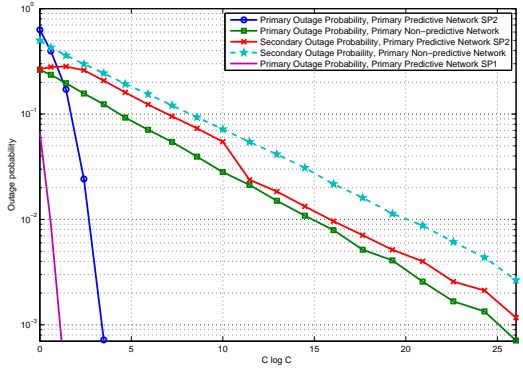}
	\caption{Outage probability vs. $C \log C$ of the primary and secondary users with $\gamma^p=0.75$, $\gamma^s=0.05$, $T=4$ and $\beta=0.3$.}
	\label{fig:Paper_SP2}
\end{figure}

\begin{figure}[t]
	\centering
		\includegraphics[width=0.45\textwidth]{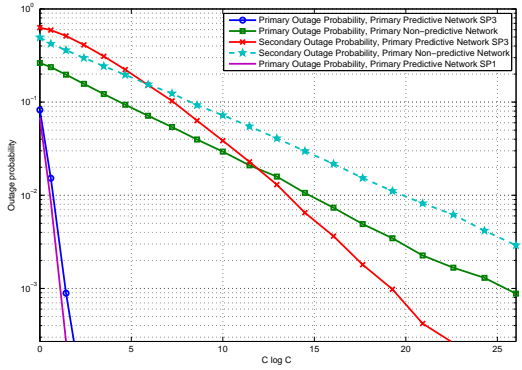}
	\caption{Outage probability vs. $C \log C$. of the primary and secondary users with $\gamma^p=0.75$, $\gamma^s=0.05$, $T=4$ and $f=0.5$.}
	\label{fig:Paper_SP3}
\end{figure}

In Fig. \ref{fig:Paper_SP3}, SP3 is evaluated for $f=0.5$. Compared with SP2, the behavior of SP3 is shown to remarkably enhance the outage probability of the secondary user at an almost negligible loss in the primary outage performance. The analysis of the diversity gain of the primary and secondary users operating according to SP3, however, are still under investigation. Overall, it can be concluded that, prediction at the primary side only does not {\bf only} enhance the primary spectral efficiency, but it can be efficiently exploited to significantly improve the spectral efficiency of the coexisting non-predictive secondary users (networks) as well.

\section{Conclusions}
\label{sec:Conc}
We have proposed a novel paradigm for wireless resource allocation which exploits the predictability of user behavior to minimize the spectral resources (e.g., bandwidth) needed to achieve certain QoS metrics. Unlike the tradition reactive resource allocation approach, in which the network can only start serving a particular user request upon its initiation, our proposed resource allocation approach anticipates future requests which allows the network more flexibility in scheduling those potential requests over an extended period of time. By adopting the outage (blocking) probability as our QoS metric, we have established the potential of our proactive resource allocation framework to achieve significant spectral efficiency gains in several interesting scenarios. More specifically, we introduced the notion of prediction diversity gain and used it to quantify the gain offered by the proposed resource allocation algorithm under different assumption on the performance of the traffic prediction technique. Moreover, we have shown that, in a network with two QoS classes, prediction at one side only does not only enhance its diversity gain, but it also improves the outage probability performance of the other user. Throughout the paper, our theoretical claims were supported by numerical results that illustrate the remarkable gains that can be leveraged from the proposed techniques.

We believe that this work has only scratched the surface of a very interesting research area which spans several disciplines and could potentially have a significant impact on the design of future wireless networks. In fact, one can immediately identify a multitude of interesting research problems at the intersection of information theory, machine learning, behavioral science, and networking. For example, our analysis have focused on the case of {\bf fixed supply and variable demand}. Clearly, the same approach can be used to {\bf match} demand with supply under more general assumptions on the two processes. In a different direction, our results should motivate further investigations on the design of efficient prediction algorithms; which will possibly require advanced tools from machine learning in addition to accurate models for user behavior that captures the predictability of traffic requests. Another avenue for future work is the cross layer optimization of content delivery over wireless networks under the proactive resource allocation models (i.e., the potential for multicast, peer-to-peer, and coupling between the time scales of different layers). Overall, as the need for wireless content delivery grows, we believe that the {\bf predictability} of the traffic pattern will be a {\bf key enabler} for exploiting a number of important factors to enhance the capacity of wireless networks. The basic idea is that, via the judicious use of predictability at {\bf the application layer time scale}, we will be able to design {\bf non-causal wireless networks}, from the end user perspective, that offer remarkable gains {\bf in capacity and delay}. Within this paradigm, our work can be viewed as the first step in laying the foundation for a systematic framework for the design and analysis of future proactive wireless networks.


\begin{thebibliography}{7}
\bibitem{FCC2002}
FCC. Spectrum policy task force report, FCC 02-155. Nov. 2002.

\bibitem{Mitola}
J. Mitola III,``Cognitive Radio: An Integrated Agent Architecture for Software Defined Radio''
Doctor of Technology Dissertation, Royal Institute of Technology (KTH),
Sweden, May, 2000

\bibitem{Ian}
I. Akyildiz, W. Lee, M. Vuran, and S. Mohanty, ``NeXt generation/dynamic spectrum access/cognitive radio wireless networks: A survey,''
\emph{Computer Networks Journal (Elsevier)}, September 2006.

\bibitem{Gridlock}
S. A. Jafar, S. Srinivasa, I. Maric, and A. Goldsmith, ``Breaking spectrum gridlock with cognitive radios: an information theoretic perspective'',
\emph{Proceedings of the IEEE}, May 2009.

\bibitem{SQBB10}
C. Song, Z. Qu, N. Blumm, A. Barabasi, ``Limits of Predictability in Human Mobility'', \emph{Science}, Vol. 327, pp. 1018-1021, Feb. 2010.

\bibitem{Gallager}
R. G. Gallager, ``Discrete Stochastic Processes'', Kluwer, Boston, 1996.

\bibitem{Glynn}
Peter W. Glynn, ``Upper bounds on Poisson tail probabilities'', \emph{Operations Research Letters}, Vol. 6, pp. 9-14, March 1987.

\bibitem{GamTadEry10}
H. El-Gamal, J. Tadrous, A. Eryilmaz, ``Proactive resource allocation and scheduling'', in preparation for submission to \emph{IEEE Transactions on Information Theory}.




\end{thebibliography}
\end{document}